\DeclareMathOperator{\Tr}{Tr}
\newcommand{\ket}[1]{|#1\rangle}
\newcommand{\stkout}[1]{\ifmmode\text{\sout{\ensuremath{#1}}}\else\sout{#1}\fi}
\newif\ifverbose
\newtheorem{theorem}{Theorem}[section]
\newtheorem{proposition}[theorem]{Proposition}
\newtheorem{assumption}[theorem]{Assumption}
\newtheorem{lemma}[theorem]{Lemma}
\newtheorem{definition}[theorem]{Definition}
\theoremstyle{definition}
\newtheorem{remark}[theorem]{Remark}
\numberwithin{equation}{section}
\renewcommand{\epsilon}{\varepsilon}
\renewcommand{\epsilon}{\varepsilon}
\renewcommand{\nicefrac}[2]{#1/#2}
\renewcommand{\tfrac}[2]{#1/#2}
\begin{document}

\title{Gibbs state postulate from dynamical stability --\\
Redundancy of the zeroth law }

\author[1,2,3]{Vjosa Blakaj\thanks{\href{mailto:vjosa.blakaj@tum.de}{vb@math.ku.dk}}}
\author[4]{Matthias C. Caro\thanks{\href{mailto:matthias.caro@warwick.ac.uk}{matthias.caro@warwick.ac.uk}}}
\author[5]{Anouar Kouraich\thanks{\href{mailto:anouar.kouraich@tum.de}{anouar.kouraich@tum.de}}}
\author[1]{Daniel Malz\thanks{\href{mailto:malz@math.ku.dk}{malz@math.ku.dk}}}
\author[5,6]{Michael M. Wolf\thanks{\href{mailto:m.wolf@tum.de}{m.wolf@tum.de}}}

\affil[1]{Department of Mathematical Sciences, University of Copenhagen, Copenhagen, Denmark}
\affil[2]{Max Planck Institute for the Science of Light, Erlangen, Germany}
\affil[3]{Department Physik, Friedrich-Alexander-Universität Erlangen-Nürnberg, Erlangen, Germany}
\affil[4]{Department of Computer Science, University of Warwick, Coventry, UK}
\affil[5]{Department of Mathematics, Technical University of Munich, Garching, Germany}
\affil[6]{Munich Center for Quantum Science and Technology (MCQST),  Munich, Germany}

\date{}
\setcounter{Maxaffil}{0}
\renewcommand\Affilfont{\itshape\small}

\maketitle

\begin{abstract}

 \noindent
Gibbs states play a central role in quantum statistical mechanics as the standard description of thermal equilibrium. Traditionally, their use is justified either by a heuristic, a posteriori reasoning, or by derivations based on notions of typicality or passivity. In this work, we show that Gibbs states are completely characterized by assuming dynamical stability of the system itself and of the system in weak contact with an arbitrary environment. This builds on and strengthens a result by Frigerio, Gorini, and Verri \cite{frigerio1986zeroth}, who derived Gibbs states from dynamical stability using an additional assumption that they referred to as the ``zeroth law of thermodynamics", as it concerns a nested dynamical stability of a triple of systems. We prove that this zeroth law is redundant and that an environment consisting solely of harmonic oscillators is sufficient to single out Gibbs states as the only dynamically stable states.

    \noindent 
\end{abstract}

\section{Introduction}

In quantum statistical physics and quantum thermodynamics, the \emph{Gibbs state}
\begin{equation}
	\rho = \frac{\exp(-\beta  H)}{\Tr[\exp(-\beta H)]}\, ,
	\label{eq:gibbs}
\end{equation}
is used to describe a system in thermal equilibrium with a heat bath at inverse temperature $\beta$. Via the Gibbs state, macroscopic properties of the system can be computed from its microscopic description given by the Hamiltonian $H$. In this way, ~\cref{eq:gibbs} bridges the microscopic and macroscopic worlds while simultaneously connecting the description of static and dynamical properties. 

Various justifications and derivations of ~\cref{eq:gibbs} can be found in the literature. If we put aside arguments that trade between similar postulates (the Gibbs, maximum entropy, and minimum free energy postulates) one can identify three routes toward a derivation of Gibbs states: via typicality, passivity, or stability.

Using \emph{typicality} as the underlying fundamental concept is closest to the classical derivation of the Gibbs ensemble by Boltzmann~\cite{Boltzmann1877} and Gibbs~\cite{Gibbs1902} from the assumption that all microstates of a system plus reservoir at fixed energy are equally likely. Quantum versions of this derivation have been carried out successfully in \cite{CanonicalTypicality,Popescu2006}, where the starting point is a `typical' (in the sense of Haar-random) quantum state of a weakly coupled system plus reservoir at a given energy.

Derivations based on \emph{passivity} use a thermodynamic rather than a statistical viewpoint: the assumption that no work can be extracted from the state by cyclic operations. Along these lines, Pusz and Woronowicz~\cite{PuszWoronwicz} and Lenard~\cite{Lenard1978} derived the form of the Gibbs state and its generalization, the Kubo–Martin–Schwinger (KMS) state. In particular, \emph{complete passivity}---the assumption that no work can be extracted from arbitrarily many copies of a state---was shown to imply the Gibbs/KMS form.
Although remarkable, these conditions and derivations arguably do not provide an intuitive understanding of ``\emph{why nature produces Gibbs states}''---a question that calls for a more dynamical explanation. 

Inspired by earlier work by Haag, Kastler, and Trych-Pohlmeyer \cite{Haag1974}, Frigerio, Gorini, and Verri gave an elegant argument showing that the Gibbs state can be derived from a notion of \emph{dynamical stability}~\cite{gorini1985quantum,frigerio1986zeroth}. 
The underlying idea is that thermal states are expected to be dynamically stable, since systems are continuously weakly perturbed by their environment. Specifically, a system in equilibrium should not depart far from an equilibrium state due to a small perturbation of its Hamiltonian, a notion Frigerio, Gorini, and Verri call \emph{stability of order one}.
Second, they define a notion of mutual equilibrium: a system in a state $\rho$, governed by a Hamiltonian $H$, is in mutual equilibrium with system $(\rho',H')$ if and only if both remain approximately stationary when brought in contact. More precisely, a system $(\rho,H)$ is said to be \emph{stable of order two}, if, for any second system, governed by any $H'$, one can find a state $\rho'$ such that $(\rho\otimes\rho',H+H')$ is stable under small perturbations of the joint Hamiltonian.
The final assumption in~\cite{gorini1985quantum,frigerio1986zeroth} is \emph{stability of order three}, which the authors refer to as \emph{the zeroth law of thermodynamics} in this context. It requires that the state $\rho'$ can be chosen uniformly such that $(\rho\otimes\rho',H+H')$ is itself stable of order two, and this, they prove, implies that $\rho$ must be a Gibbs state. However, they write

\begin{center}
\parbox{0.8\textwidth}{
\textit{“one may wonder whether stability of order two alone is enough
to ensure that $\rho$ is the canonical ensemble: in other words, “does the zeroth 
law of thermodynamics hold by definition of mutual equilibrium?" [...]
We are inclined to believe that there are systems for which a state which is stable of order two need not be the canonical ensemble.” }\cite{frigerio1986zeroth}}
\end{center}

The main result of the present paper is that this is not the case: any state stable of order two must be a Gibbs state (\cref{theorem:second-order-stability-gibbs}).

This result has a number of interesting conceptual and technical implications.
On the conceptual side, it simplifies the set of assumptions necessary to derive thermodynamics from first principles.
An advantage compared to complete passivity is that one does not require arbitrarily many copies of the system.
Moreover, we will see that it is sufficient to demand stability when coupling to environments composed only of harmonic oscillators, something that naturally happens in many contexts. In fact, while we prefer individual harmonic oscillators over a field-theoretic framework for simplicity, the environment is, in essence, a single bosonic quantum field.

On the technical side, \cite{frigerio1986zeroth} show that stability of order two implies that the populations of the state form a monotonically non-increasing function of the energy, so in particular the state is passive. However, to go from mere passivity (a monotone population distribution) to a Gibbs state previously required either third-order stability \cite{frigerio1986zeroth} or an additional assumption \cite{PuszWoronwicz, Lenard1978}, for instance one based on the inclusion of arbitrarily many copies of the original system. With our result, this gap between passivity and full equilibrium is closed already at second order. Compared with the works of \cite{PuszWoronwicz, Lenard1978, frigerio1986zeroth}, where Lenard \cite{Lenard1978} needed one auxiliary system with more complicated assumptions to derive the Gibbs form (disregarding complete passivity), and where Frigerio, Gorini and Verri \cite{frigerio1986zeroth} needed two, our work recovers the best of both worlds: conceptually simple first-principle assumptions with a minimal number of auxiliary systems---a single bosonic quantum field.

Before delving into the mathematical details, we note that redundancy of the zeroth law has also been demonstrated in phenomenological thermodynamics \cite{kammerlander2018zeroth}. Our result, however, is distinct, as it emerges from the framework of quantum statistical mechanics. And while the naming of the third hypothesis in \cite{frigerio1986zeroth} might be debatable, its redundancy renders this question of secondary importance.

\section{Preliminaries and Previous Results}

In this section, we collect some relevant mathematical concepts, introduce notation, and recall the results of Frigerio, Gorini, and Verri \cite{frigerio1986zeroth} on which our work is based.

\paragraph{Existence of Gibbs states}
The dynamics of the quantum systems under consideration, defined on a (typically infinite dimensional) Hilbert space $\mathcal{H}$, are governed by a Hamiltonian $H$ that, by assumption, has the property that all Gibbs states exist: 

\begin{assumption}[Hamiltonian]\label{Hamiltonianassumption}
All Hamiltonians used in this work are self-adjoint operators $H:\mathcal{D}(H) \subseteq \mathcal{H} \to \mathcal{H}$ assumed to be such that $\exp(-\beta H)$ is trace class for all inverse temperatures $\beta>0$. 
\end{assumption}

It is well known that this requires the Hamiltonian to be bounded from below and to have purely discrete spectrum \cite{LebLiebSimon}. The following proposition recalls this fact for completeness and pinpoints more precisely, which Hamiltonians satisfy \cref{Hamiltonianassumption} in infinite dimensions---those whose energy levels grow $E_n\nearrow\infty$ faster than logarithmic. 

\begin{proposition}[Existence of Gibbs states] Let $H$ be a self-adjoint operator on an infinite-dimensional Hilbert space. For every $\beta_0\in[0,\infty)$ the following are equivalent:
\begin{enumerate}
    \item[$(i)$]  The Gibbs operator $e^{-\beta H}$ is trace class for all $\beta>\beta_0$. That is,
    $$ Z_H(\beta)\coloneqq \Tr\!\bigl[\mathrm{e}^{-\beta H}\bigr] < \infty\qquad \forall
      \beta>\beta_0. $$
    \item[$(ii)$] $H$ is bounded from below, and its spectrum is purely discrete and consists of isolated eigenvalues of finite multiplicity that, when taking multiplicities into account and ordering $E_0 \leq E_1 \leq E_2 \leq ...$, satisfy $$   \limsup_{n\rightarrow\infty}\frac{\ln n}{E_n} \leq \beta_0.$$
\end{enumerate}
\end{proposition}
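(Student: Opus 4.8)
The plan is to prove the two implications separately, in both directions reducing everything to the eigenvalue representation $Z_H(\beta)=\sum_n e^{-\beta E_n}$ (counted with multiplicity) and to an elementary comparison between this Dirichlet-type series and the growth of the ordered eigenvalues.

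For the direction $(ii)\Rightarrow(i)$ I would first note that the hypotheses in $(ii)$ force $E_n\nearrow\infty$: on an infinite-dimensional space a purely discrete spectrum of isolated, finite-multiplicity eigenvalues cannot stay bounded, since a bounded non-decreasing sequence would accumulate and the accumulation point would lie in the essential spectrum, contradicting isolation. The limsup hypothesis then states that for every $\epsilon>0$ one has $E_n\ge \ln n/(\beta_0+\epsilon)$ for all large $n$, whence $e^{-\beta E_n}\le n^{-\beta/(\beta_0+\epsilon)}$. Given $\beta>\beta_0$, choosing $\epsilon=(\beta-\beta_0)/2$ makes the exponent $\beta/(\beta_0+\epsilon)=2\beta/(\beta+\beta_0)$ strictly larger than $1$, so the tail of $\sum_n e^{-\beta E_n}$ is dominated by a convergent series $\sum_n n^{-p}$ with $p>1$, and $Z_H(\beta)<\infty$.

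For $(i)\Rightarrow(ii)$ I would proceed in three steps. First, since a trace-class operator is in particular bounded, the spectral theorem forces $\sigma(H)$ to be bounded below: otherwise $e^{-\beta\lambda}$ would be unbounded on $\sigma(H)$ and $\|e^{-\beta H}\|=\sup_{\lambda\in\sigma(H)}e^{-\beta\lambda}=\infty$. Second, trace class implies compact, so $e^{-\beta H}$ is a compact, positive, injective operator whose spectrum consists of isolated finite-multiplicity eigenvalues accumulating only at $0$; pulling this back through the functional calculus $\mu=e^{-\beta\lambda}$ shows $\sigma(H)$ is purely discrete with isolated finite-multiplicity eigenvalues $E_n\nearrow\infty$. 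Third, the quantitative heart: using monotonicity, since $E_m\le E_n$ for $m\le n$ we get $(n+1)e^{-\beta E_n}\le\sum_{m=0}^n e^{-\beta E_m}\le Z_H(\beta)$, hence $E_n\ge[\ln(n+1)-\ln Z_H(\beta)]/\beta$. Dividing by $E_n$ and letting $n\to\infty$ gives $\limsup_n \ln n/E_n\le\beta$ for every $\beta>\beta_0$, and letting $\beta\downarrow\beta_0$ yields the stated bound.

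The two series estimates are routine; the step needing the most care is the spectral dictionary in $(i)\Rightarrow(ii)$, namely verifying rigorously that compactness of $e^{-\beta H}$ transfers to pure discreteness of $\sigma(H)$ with the correct bijection of eigenvalues and multiplicities under the functional calculus, and that the point $0$ in the spectrum of $e^{-\beta H}$ corresponds to the accumulation $E_n\to+\infty$ rather than to an honest eigenvalue of $H$. The genuine conceptual obstacle is to see that the convergence threshold is captured \emph{exactly} by $\limsup_n \ln n/E_n$; the device that delivers this sharply is the monotonicity bound $(n+1)e^{-\beta E_n}\le Z_H(\beta)$, which pins the critical exponent from above and pairs with the elementary lower bound $E_n\ge \ln n/(\beta_0+\epsilon)$ used in the converse to match it from below.
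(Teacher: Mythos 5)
Your proposal is correct and follows essentially the same route as the paper: both directions reduce to the eigenvalue series, the spectral part uses compactness of $e^{-\beta H}$ plus functional calculus to show $\sigma_{\mathrm{ess}}(H)=\emptyset$, and the sharp threshold is pinned by the same monotonicity bound $(n+1)\,e^{-\beta E_n}\le \sum_{m\le n}e^{-\beta E_m}$. The only difference is presentational: you derive $\limsup_n \ln n/E_n \le \beta$ directly from this bound and let $\beta\downarrow\beta_0$, whereas the paper runs the equivalent contrapositive argument (a subsequence with $\ln n_k/E_{n_k}>p\beta$, $p>1$, forces divergence of the partial sums).
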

\begin{proof}
    $(i) \Rightarrow (ii):$ The assumption of $e^{-\beta H}$ being trace class implies that it is a compact operator and, in particular, bounded. Consider the spectrum of $H$ divided into an essential and a discrete part, $\sigma(H)=\sigma_{disc}(H)\cup\sigma_{ess}(H)$. By definition, the discrete part consists of isolated eigenvalues of finite multiplicity and without accumulation points. From continuous functional calculus (\cite{OperatorTheory}, Chap.5) we get the inclusion $\exp\big(-\beta\; \sigma_{ess}(H)\big)\subseteq \sigma_{ess}\big(\exp(-\beta H)\big)=\{0\}$ and thus $\sigma_{ess}(H)=\emptyset$ as $0$ is not in the range of the exponential function. Boundedness of the Gibbs operator by some constant $C$ on the level of eigenvalues then amounts to $\exp(-\beta E_n)\leq C$ so that $E_n\geq -\tfrac{1}{\beta}\ln C$, which shows that $H$ is bounded from below. 

    We aim for the contrapositive $\neg(ii) \Rightarrow \neg (i)$. So assume for some $p>1$ and $\beta>\beta_0$ that $\limsup_{n\rightarrow\infty} \tfrac{(\ln n)}{E_n}>p\beta$. By definition of the $\limsup$, there is a subsequence $(E_{n_k})_{k\in\mathbbm{N}}$ s.t. $\tfrac{(\ln n_k)}{E_{n_k}}>p\beta$, which means that $$n_k^{-\frac{1}{p}}<e^{-\beta E_{n_k}}.$$ 
    Using ordered  eigenvalues now allows us to bound
    $$\sum_{n=1}^{n_k} e^{-\beta E_n} \geq n_k e^{-\beta E_{n_k}} > n_k^{1-1/p}\longrightarrow\infty\quad\text{for}\quad k\rightarrow\infty.$$

    $(ii) \Rightarrow (i):$ Choose any $p\in(1,\beta/\beta_0)$. Then $\limsup_{n\rightarrow\infty} \tfrac{(\ln n)}{E_n}<\beta/p$, and for $n$ sufficiently large, we have $E_n>0$ and $\tfrac{(\ln n)}{E_n}<\beta/p $, which implies $\exp(-\beta E_n)<n^{-p}$. Since $p>1$, this yields a convergent series $\sum_n \exp(-\beta E_n) < \sum_n n^{-p} < \infty $.
\end{proof}

\paragraph{Stability under Perturbations and Thermal Contact}

\emph{Dynamical stability} is a fundamental characteristic of an equilibrium state. By this, we mean that an equilibrium state of a system with Hamiltonian $H$ should not change significantly in time when $H$ is perturbed by a small term $\lambda V$. If the state were to change appreciably, it could not describe an equilibrium situation, since under ordinary conditions it would be destroyed. A similar requirement applies when the system is brought into \emph{thermal contact} with another system: in equilibrium, the state should remain approximately stationary, regardless of the nature of the second system.

To formalize thermal contact and dynamical equilibrium, Frigerio, Gorini, and Verri \cite{frigerio1986zeroth} introduced a hierarchy of stability notions for a \emph{stationary state} $\rho$ of a system governed by a Hamiltonian $H$ fulfilling \cref{Hamiltonianassumption}. These increasingly stringent notions---\emph{first}, \emph{second}, and \emph{third-order stability}---are defined in \Cref{definition:stabilityOne,definition:stabilityTwo,definition:stabilityK}. While in \cite{frigerio1986zeroth} it is shown that third-order stability guarantees that $\rho$ is a Gibbs state, the question of whether second-order stability alone suffices remained open and will be answered affirmatively in \Cref{sec:mainresult}.\vspace{0.1cm}

We will denote by $\mathcal{S}(\mathcal{H})$ the set of density operators on the Hilbert space $\mathcal{H}$ and call a state $\rho\in\mathcal{S}(\mathcal{H})$ \emph{stationary}  under the dynamics generated by a Hamiltonian $H$ if $e^{-itH} \rho e^{itH} = \rho$ for all $t\geq 0$. All  perturbations will be assumed \emph{relatively bounded} with respect to the considered Hamiltonian:

\begin{definition}[Relatively bounded operators] 
    An operator $V$ on a Hilbert space $\mathcal{H}$ is called relatively bounded with respect to an operator $H$ (``$H$-bounded'') if the respective domains satisfy $\mathcal{D}(H) \subseteq \mathcal{D}(V)$ and if there exist constants $a, b \geq 0$ such that
    \begin{equation}
        ||V \psi || \leq a ||H \psi || + b ||\psi||\quad \textrm{for all } \psi \in \mathcal{D}(H)\, .
    \end{equation}
\end{definition}

With this we can introduce the first notion of stability:

\begin{definition}[Stability of order one~{\cite[Section 2]{frigerio1986zeroth}}]\label{definition:stabilityOne}
    Let $H:\mathcal{D}(H)\to\mathcal{H}$, with $\mathcal{D}(H)\subseteq\mathcal{H}$, be a Hamiltonian satisfying \cref{Hamiltonianassumption} and $\rho\in\mathcal{S}(\mathcal{H})$ a stationary state under the dynamics generated by $H$. We say that $\rho$ is \emph{stable of order one with respect to~$H$} if, for every $H$-bounded self-adjoint operator $V:\mathcal{D}(V)\to\mathcal{H}$, with $\mathcal{D}(V)\subseteq\mathcal{H}$ and for every bounded self-adjoint operator $O\in\mathcal{B}(\mathcal{H})$, 
    \begin{equation}
        \lim_{\lambda\to 0}\sup_{t\geq 0} \left(\Tr\left[O  e^{-i t (H+\lambda V)} \rho e^{i t (H+\lambda V)} \right] - \Tr\left[O \rho\right]\right)
        = 0\, .
    \end{equation}
\end{definition}\vspace{0.4cm}

The physical relevance of the above definition stems from the fact that the model Hamiltonian $H$ of a statistical mechanical system is rarely a complete and exact description of \emph{all} interactions influencing the time evolution of the system. As noted in \cite{frigerio1986zeroth}, the eigenstates of $H$ are ``not exact quantum states of a perfectly isolated system with every particle interaction fully accounted for''. Without stability, even ordinary circumstances could destabilize the system: weak residual interactions within the system itself may disrupt equilibrium, and interactions with the environment can likewise induce transitions between the eigenstates of $H$. The last considerations motivate the stronger notion of stability introduced below, which models thermal contact with an auxiliary system.

\begin{definition}[Stability of order two~{\cite[Section 3]{frigerio1986zeroth}}]\label{definition:stabilityTwo}
    Let $H:\mathcal{D}(H)\to\mathcal{H}$, with $\mathcal{D}(H)\subseteq\mathcal{H}$, be a Hamiltonian satisfying \cref{Hamiltonianassumption} and $\rho\in\mathcal{S}(\mathcal{H})$ a stationary state under the dynamics generated by $H$.   
    We say that $\rho$ is \emph{stable of order two with respect to~$H$} if, for every Hilbert space $\mathcal{H}'$ and for every Hamiltonian $H':\mathcal{D}(H')\to\mathcal{H}'$ satisfying \cref{Hamiltonianassumption}, there exists a state $\rho'\in\mathcal{S}(\mathcal{H}')$ such that $\rho\otimes\rho'\in \mathcal{S}(\mathcal{H}\otimes \mathcal{H}')$ is stable of order one with respect to~$H\otimes\mathds{1}_{\mathcal{H}'} + \mathds{1}_{\mathcal{H}}\otimes H'$.
\end{definition}\vspace{0.4cm}

In simpler words, the above definition tells us that in equilibrium the state $\rho$ remains stable under weak coupling to any environment that satisfies the same basic assumptions on the Hamiltonian $H'$. 
As noted in \cite{frigerio1986zeroth}, stability of order two implies stability of order one for the marginals (see \Cref{proposition:first-order-stability-product-to-factors1} for a formal statement and proof).

Along similar lines, stronger notions of stability have been introduced and were used  to characterize Gibbs states in \cite{frigerio1986zeroth}.

\begin{definition}[Stability of order three~{\cite[Section 4]{frigerio1986zeroth}}]\label{definition:stabilityK}
    Let $H:\mathcal{D}(H)\to\mathcal{H}$, with $\mathcal{D}(H)\subseteq\mathcal{H}$, be a Hamiltonian satisfying \cref{Hamiltonianassumption} and $\rho\in\mathcal{S}(\mathcal{H})$ a stationary state under the dynamics generated by $H$.
    We say that $\rho$ is \emph{stable of order three with respect to~$H$} if, for any Hilbert spaces $\mathcal{H}_1,\mathcal{H}_2$ and for any self-adjoint operators $H_i:\mathcal{D}(H_i)\to\mathcal{H}_i$ 
    satisfying \cref{Hamiltonianassumption}, there exist states $\rho_i\in\mathcal{S}(\mathcal{H}_i)$, $i=1,2$, such that $\rho\otimes\rho_1\otimes\rho_2$ is stable of order one with respect to~$H\otimes\mathds{1}_{\mathcal{H}_1}\otimes \mathds{1}_{\mathcal{H}_{2}} + \mathds{1}_{\mathcal{H}} \otimes H_1 \otimes \mathds{1}_{\mathcal{H}_2} + \mathds{1}_{\mathcal{H}} \otimes \mathds{1}_{\mathcal{H}_1} \otimes H_2 $, and the state $\rho_1 \in \mathcal{S}(\mathcal{H}_1)$ is independent of the choice of $\mathcal{H}_2$, $H_2$, and $\rho_2$.
\end{definition}

Stability of order $k$ for any $k\in\mathbb{N}$ is defined analogously.
An equivalent recursive definition of stability of order $k \in \mathbb{N}$ is as follows: $\rho$ is stable of order $k$ with respect to~$H$ if, for any Hilbert space $\mathcal{H}$ and for any Hamiltonian $H':\mathcal{D}(\mathcal{H}')\to\mathcal{H}'$ satisfying \cref{Hamiltonianassumption}, there exists a state $\rho'\in\mathcal{S}(\mathcal{H}')$ such that $\rho\otimes\rho'$ is stable of order $k-1$ with respect to~$H\otimes\mathds{1}_{\mathcal{H}'} + \mathds{1}_{\mathcal{H}}\otimes H'$.

Note that by the nested logic of the form
$$\forall H_1\exists\rho_1\forall H_2\exists\rho_2\ldots \qquad \rho\otimes\rho_1\otimes\rho_2\ldots \text{is stable of order one},$$
stability of order $k$ is more demanding than stability of order $k-1$, as it requires that each $\rho_i$ must not depend on any $H_j, \rho_j$ with $j>i$.
\vspace{0.2cm}

Dynamical stability of a state with respect to the Hamiltonian imposes structural constraints on the state. As shown in \cite{frigerio1986zeroth}, stability of order one already implies that the state is a function of the Hamiltonian, while stability of order two further requires this function to be non-increasing, so that the state's population does not increase as the energy increases. These results are summarized in the following theorem.

\begin{theorem}[Structural constraints from dynamical stability {\cite[Theorems 1 and 2]{frigerio1986zeroth}}]\label{theorem:first-seecond-order-stability-function}
    Let $H:\mathcal{D}(H)\to\mathcal{H}$, with $\mathcal{D}(H)\subseteq\mathcal{H}$, be a Hamiltonian satisfying \cref{Hamiltonianassumption}, and let $\rho\in\mathcal{S}(\mathcal{H})$ be a quantum state. Then:
    \begin{enumerate}
        \item If $\rho$ is stable of order one with respect to~$H$, there exists a function $f:\sigma(H)\to [0,1]$ such that $\rho=f(H)$.
        \item If $\rho$ is stable of order two with respect to~$H$, the function $f$ is monotonically non-increasing on $\sigma(H)$, and $f\!\restriction_{f^{-1}((0,1]]}$ is strictly decreasing.
    \end{enumerate}
\end{theorem}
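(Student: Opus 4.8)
The plan is to treat the two claims in turn, reducing each to an exactly solvable perturbation by letting the perturbing operator act inside a single energy shell. For the first claim, note that stationarity means $\rho$ commutes with every spectral projection of $H$, so $\rho=\bigoplus_{E\in\sigma(H)}\rho_E$ is block-diagonal with $\rho_E$ a positive operator on the eigenspace $\mathcal H_E$, which is finite-dimensional by \cref{Hamiltonianassumption}. Fix $E$ and let $V=V^\dagger$ be any bounded operator supported on $\mathcal H_E$; it is trivially $H$-bounded, so stability of order one must hold for it. Since $H$ and $V$ both reduce the splitting $\mathcal H_E\oplus\mathcal H_E^\perp$ and $H\!\restriction_{\mathcal H_E}=E\,\id$, the perturbed unitary group factorizes and the energy phases cancel within the shell, giving the \emph{exact} evolution $\rho_E(t)=e^{-i\lambda t V}\rho_E e^{i\lambda t V}$. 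Choosing $O$ supported on $\mathcal H_E$ and substituting $\tau=\lambda t$, the quantity $\sup_{t\ge0}\big(\Tr[O\rho(t)]-\Tr[O\rho]\big)=\sup_{\tau\ge0}\Tr\big[O(e^{-i\tau V}\rho_E e^{i\tau V}-\rho_E)\big]$ is independent of $\lambda$. Stability forces this $\lambda$-independent supremum, applied also to $-O$, to vanish, so $\Tr[O\rho_E(\tau)]=\Tr[O\rho_E]$ for all $\tau$ and all $O$, whence $e^{-i\tau V}\rho_E e^{i\tau V}=\rho_E$. As $V$ ranges over all self-adjoint operators on the finite-dimensional shell, $\rho_E$ must be a scalar multiple $f(E)P_E$ of the projection, i.e.\ $\rho=f(H)$ with $f:\sigma(H)\to[0,1]$.

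For the second claim I first turn order-two stability into a population balance condition. By hypothesis there is a state $\rho'$ such that $\rho\otimes\rho'$ is stable of order one with respect to $H_{\mathrm{tot}}=H\otimes\id+\id\otimes H'$; since $e^{-\beta H_{\mathrm{tot}}}=e^{-\beta H}\otimes e^{-\beta H'}$ is trace class, $H_{\mathrm{tot}}$ satisfies \cref{Hamiltonianassumption} and the first claim applies to the joint system, giving $\rho\otimes\rho'=F(H_{\mathrm{tot}})$. Writing $\rho=\sum_m p_m\ket{m}\!\bra{m}$ and $\rho'=\sum_k q_k\ket{k}\!\bra{k}$ (diagonality of $\rho'$ follows because $F(H_{\mathrm{tot}})$ is diagonal in the product eigenbasis and some $p_m>0$), being a function of $H_{\mathrm{tot}}$ means that product eigenstates of equal total energy carry equal populations: $p_m q_k=p_n q_{k'}$ whenever $E_m+\epsilon_k=E_n+\epsilon_{k'}$.

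The sign is then pinned down by the environment. Fix two system levels $E_m>E_n$ with gap $\Delta=E_m-E_n$ and take for the environment a single harmonic oscillator $H'=\Delta\,N$ with matching frequency, whose non-degenerate ladder $\{k\Delta\}_{k\ge0}$ satisfies \cref{Hamiltonianassumption}. The resonance $E_m+k\Delta=E_n+(k+1)\Delta$ yields $p_m q_k=p_n q_{k+1}$ for every $k$, so if $f(E_n)=p_n>0$ then $q_{k+1}/q_k=p_m/p_n=:r$ is constant and $q_k=q_0 r^k$ with $q_0>0$; normalizability $\sum_k q_k<\infty$ forces $r<1$, hence $f(E_m)<f(E_n)$, which is the strict decrease on $f^{-1}((0,1])$. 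If instead $f(E_n)=0$, then $p_m q_k=0$ for all $k$ forces $f(E_m)=0$; ranging over all pairs $E_m>E_n$ gives that $f$ is non-increasing across $\sigma(H)$.

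The main obstacle is not the algebra but the careful handling of the definition's nested $\lim_{\lambda\to0}\sup_{t\ge0}$ in infinite dimensions. The decisive simplification is to probe with a shell-supported (hence bounded, relatively bounded) perturbation, which makes the restricted dynamics exactly solvable and collapses the $\lambda\to0$ limit to the rescaling $\tau=\lambda t$, thereby sidestepping any convergence analysis of the Dyson series for unbounded $H$. Conceptually, the crux of the second claim is that the oscillator's \emph{infinite, equally spaced} ladder converts a single balance ratio into a geometric sequence whose summability is what fixes $r<1$; a finite or unequally spaced environment would only produce an unsigned ratio, which is precisely why a single bosonic mode is the natural and sufficient probe and anticipates the role of harmonic-oscillator baths in the main result.
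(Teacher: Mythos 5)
Your proposal is correct, and it follows essentially the same route as the source of this statement: the paper itself does not prove \cref{theorem:first-seecond-order-stability-function} but imports it from \cite[Theorems 1 and 2]{frigerio1986zeroth}, whose argument you have reconstructed. Both of your key mechanisms---the exactly solvable shell-supported perturbation, where the rescaling $\tau=\lambda t$ makes the supremum over $t\geq 0$ independent of $\lambda$ and thus forces $[V,\rho_E]=0$, and the resonant single-mode oscillator, where the balance relation $p_m q_k = p_n q_{k+1}$ turns normalizability of $\rho'$ into the strict inequality $p_m<p_n$---are precisely the mechanisms of the cited proof, which the paper recapitulates in \cref{Remark:GapInterpretation,remark:Stability2withHOgivesGibbs} and reuses in the proof of \cref{theorem:second-order-stability-gibbs}.
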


The following remark illustrates how second-order stability ensures that equal energy gaps in the Hamiltonians $H$ and $H'$ lead to identical population ratios.

\begin{remark}[{\cite[Section 3]{frigerio1986zeroth}}]\label{Remark:GapInterpretation} 
Let $\rho \in \mathcal{S}(\mathcal{H})$ be stable of order two with respect to the Hamiltonian $H$. From \Cref{theorem:first-seecond-order-stability-function} and \Cref{proposition:first-order-stability-product-to-factors1} it follows that, for every Hamiltonian $H'$, there exist a state $\rho'$, a function $h : \sigma(H) + \sigma(H') \to [0,1]$, a non-increasing function $f : \sigma(H) \to [0,1]$, and a function $g : \sigma(H') \to [0,1]$ such that
\begin{equation}
    \rho \otimes \rho' = h\bigl(H \otimes \mathds{1}' + \mathds{1} \otimes H'\bigr)
    = f(H) \otimes g(H').
\end{equation}
In particular, this implies that
\begin{equation}
    \frac{p_n}{p_m} = \frac{p'_s}{p'_r}
    \quad \text{whenever} \quad
    E_n - E_m = E'_s - E'_r.
\end{equation}
By \cref{proposition:first-order-stability-product-to-factors1}, we know that $f(E_n) =: p_n \in \sigma(\rho)$ and $f(E_m) =: p_m \in \sigma(\rho)$, as well as $g(E'_s) =: p'_s \in \sigma(\rho')$ and $g(E'_r) =: p'_r \in \sigma(\rho')$.
This relation highlights a correspondence between equal energy gaps in the Hamiltonians $H$ and $H'$ and the equality of the corresponding probability ratios. 
\end{remark}

Another important observation concerns the form of the state $\rho' \in \mathcal{S}(\mathcal{H})$ under stability of order two when the auxiliary Hamiltonian $H'$ is chosen to be that of a harmonic oscillator.

\begin{remark}\label{remark:Stability2withHOgivesGibbs}
    Let $\rho \in \mathcal{S}(\mathcal{H})$ be stable of order two with respect to the Hamiltonian $H$, and choose the auxiliary Hamiltonian $H'$ to be that of a harmonic oscillator. Then the corresponding state $\rho'$ is necessarily a thermal (Gibbs) state. This follows from \cref{Remark:GapInterpretation} {\cite[Theorem 2]{frigerio1986zeroth}}; see also the proof of \cref{theorem:second-order-stability-gibbs}.    
\end{remark}

First-order stability is not sufficient to single out the Gibbs state among all functions of the Hamiltonian. Indeed, by the sufficient criterion for first-order stability in \cite[Theorem~1]{frigerio1986zeroth}, any non-negative function that decays at least exponentially fast already defines a first-order stable state. By contrast, it is shown in \cite{frigerio1986zeroth} that stability of order three enforces the Gibbs form:

\begin{theorem}[{\cite[Theorem 3]{frigerio1986zeroth}}]\label{theorem:third-order-stability-gibbs}
    Let $H:\mathcal{D}(H)\to\mathcal{H}$, with $\mathcal{D}(H)\subseteq\mathcal{H}$, be a Hamiltonian satisfying \cref{Hamiltonianassumption} and $\rho\in\mathcal{S}(\mathcal{H})$ a stable state of order three with respect to~$H$.
    Then, there exists $\beta\in (0,\infty]$ such that
    \begin{equation}
        \rho
        = \rho(\beta) 
        = \frac{e^{-\beta H}}{\Tr[e^{-\beta H}]}\, .
    \end{equation}
    Here, taking $\beta = \infty$ (zero temperature) gives the special case where $\rho$ is proportional to the ground-state space projector.
\end{theorem}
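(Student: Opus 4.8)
The plan is to show that second-order stability of $\rho$, which comes for free from third-order stability via \Cref{definition:stabilityK}, already forces the populations $p_n := f(E_n)$ to satisfy $p_n/p_m = e^{-\beta(E_n-E_m)}$, and then to use the \emph{uniform} auxiliary system furnished by third-order stability to fix the single rate $\beta$ globally. By \Cref{theorem:first-seecond-order-stability-function}, $\rho=f(H)$ with $f$ non-increasing and strictly decreasing on its support, so the $p_n$ are ordered and it suffices to prove that $\ln p_n$ is affine in $E_n$ on the support; the common slope will be $-\beta$, and on the orthogonal complement of the support we read off the $\beta=\infty$ (ground-state projector) case. The starting data is \Cref{Remark:GapInterpretation}: for every auxiliary $H'$ there is a partner $\rho'$ for which equal gaps force equal population ratios, and by \Cref{remark:Stability2withHOgivesGibbs}, when $H'$ is a harmonic oscillator of frequency $\omega$ the partner $\rho'$ is a Gibbs state at some inverse temperature $\beta(\omega)$. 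Matching the oscillator ladder against the spectrum of $H$ yields $p_n/p_m=e^{-\beta(\omega)(E_n-E_m)}$ whenever $E_n-E_m\in\omega\mathbb{Z}$, so the whole problem reduces to showing that the a priori frequency-dependent rate $\beta(\omega)$ is one and the same number for all $\omega$.

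This is where third-order stability enters. First I would fix a harmonic oscillator $H_1$ of frequency $\omega_1$ and use \Cref{definition:stabilityK} in its recursive form to produce a partner $\rho_1$ that is \emph{independent of all later choices} and such that $\rho\otimes\rho_1$ is stable of order two with respect to $H\otimes\mathds{1}+\mathds{1}\otimes H_1$. Applying \Cref{theorem:first-seecond-order-stability-function} to this product — together with the fact that a product state which is a function of the sum $H\otimes\mathds{1}+\mathds{1}\otimes H_1$ must be geometric along the oscillator ladder — identifies $\rho_1$ as a Gibbs state at a definite inverse temperature $\beta_1$, i.e. a fixed reference thermometer whose ladder populations obey $q_a/q_b=e^{-\beta_1(a-b)\omega_1}$. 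Treating $\rho\otimes\rho_1$ as the new system and coupling a \emph{second} oscillator of frequency $\omega_2$, \Cref{Remark:GapInterpretation} and \Cref{remark:Stability2withHOgivesGibbs} give a partner at inverse temperature $\tilde\beta(\omega_2)$ together with $(p_nq_a)/(p_mq_b)=e^{-\tilde\beta(\omega_2)[(E_n-E_m)+(a-b)\omega_1]}$ whenever the combined gap lies in $\omega_2\mathbb{Z}$. Feeding in the \emph{known} ladder of the fixed reference $\rho_1$ (take $n=m$ and $\omega_2$ a rational fraction of $\omega_1$) forces $\tilde\beta(\omega_2)=\beta_1$, and then feeding in the gaps of $H$ (take $a=b$) forces $p_n/p_m=e^{-\beta_1(E_n-E_m)}$ for every pair whose gap is commensurate with $\omega_1$. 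The decisive feature, unavailable at second order, is that $\rho_1$ is the same state for \emph{all} choices of the second oscillator, so one reference temperature is compared against a whole family of probes.

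The main obstacle is to remove the commensurability restriction and reach \emph{all} pairs, including gaps incommensurate with $\omega_1$, since exact gap matching with a single oscillator ladder only ever covers the countable set $\omega_1\mathbb{Q}$. I would address this by letting the probe frequency run through $\omega_2=\omega_1/N$ with $N\to\infty$, so that every gap in the dense set $\omega_1\mathbb{Q}$ is pinned to the single rate $\beta_1$, and then by a continuity/limit argument on $\beta(\omega)$ as $\omega\to0^+$: for a fixed gap $\Delta$ the relation $p_n/p_m=e^{-\beta(\Delta/k)\Delta}$ holds for every $k$, so $\beta(\Delta/k)$ is constant in $k$ along the sequence $\omega=\Delta,\Delta/2,\Delta/3,\dots$, and identifying these common values across different gaps $\Delta$ with the limit $\beta_1$ yields $p_n/p_m=e^{-\beta_1(E_n-E_m)}$ for every pair simultaneously. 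This transfer from a dense family of commensurate gaps to an arbitrary gap is the technical heart of the argument and the step I expect to require the most care. Finally, non-increasing $f$ gives $\beta_1\ge0$; the value $\beta_1=0$ is excluded by \Cref{Hamiltonianassumption} (the maximally mixed state is not trace class in infinite dimension), and when $f$ is supported only on the lowest energy level one records the boundary case $\beta_1=\infty$, where $\rho$ is the normalized ground-state projector, completing the identification of $\rho$ with $e^{-\beta H}/\Tr[e^{-\beta H}]$ for some $\beta\in(0,\infty]$.
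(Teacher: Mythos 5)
Your argument is sound up to and including the commensurate case: third-order stability lets you fix one oscillator partner $\rho_1$ once and for all, and probing the pair $\rho\otimes\rho_1$ with oscillators of frequency $\omega_1/N$ pins $p_n/p_m=e^{-\beta_1(E_n-E_m)}$ for every gap in $\omega_1\mathbb{Q}$. But the step you yourself flag as the ``technical heart''---transferring the rate to gaps incommensurate with $\omega_1$---is not an argument. Exact gap matching (\cref{Remark:GapInterpretation}) is an exact arithmetic condition: a single-mode probe of frequency $\omega_2$ constrains a pair gap only when that gap lies in $\omega_2\mathbb{Z}$, and if $\Delta/\omega_1\notin\mathbb{Q}$ then no probe frequency can simultaneously divide a gap involving $\Delta$ and a gap of the $\omega_1$-ladder (that would force $\Delta/\omega_1\in\mathbb{Q}$). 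So your two families $\{\beta(\Delta/k)\}_{k}$ (constant $=\beta_\Delta$) and $\{\beta(\omega_1/N)\}_{N}$ (constant $=\beta_1$) live on disjoint sets of frequencies, and you have no continuity of $\omega\mapsto\beta(\omega)$, nor any existence of a limit as $\omega\to 0^+$, that would identify $\beta_\Delta$ with $\beta_1$. Density of $\omega_1\mathbb{Q}$ in $\mathbb{R}$ does not help, because equal-gap matching never ``interpolates''. As it stands, your proof establishes the Gibbs form only within each commensurability class of gaps---exactly the situation already covered by the paper's \Cref{appendix:mutually-rational}, and strictly weaker than the theorem.

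The missing mechanism is monotonicity. Since (by third-order stability) $\rho\otimes\rho_1$ is stable of order two, \cref{theorem:first-seecond-order-stability-function} gives $\rho\otimes\rho_1=F(H+A_{\omega_1})$ with $F$ \emph{non-increasing}. With $\rho_1$ Gibbs at $\beta_1$---which, note, does not follow from your parenthetical claim that a product state which is a function of the sum must be geometric along the ladder (the paper's counterexample remark shows this is false when no gap of $H$ is a multiple of $\omega_1$); you must instead probe the pair with a second $\omega_1$-oscillator, or choose $\omega_1$ equal to a gap of $H$---monotonicity yields, for \emph{every} gap $\Delta_n=E_n-E_0$, commensurate or not, the two-sided bound
\begin{equation}
  e^{-\beta_1(\Delta_n+\omega_1)}\;\le\;\frac{p_n}{p_0}\;\le\;e^{-\beta_1(\Delta_n-\omega_1)},
\end{equation}
obtained by comparing the level $(E_n,a)$ with the ladder levels $(E_0,a+\lceil\Delta_n/\omega_1\rceil)$ and $(E_0,a+\lfloor\Delta_n/\omega_1\rfloor)$. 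Now take $\omega_1=\Delta_1/N$ for one fixed reference gap $\Delta_1$, so that your commensurate argument forces $\beta_1(\Delta_1/N)=\beta_{\Delta_1}$ independently of $N$, and let $N\to\infty$: the bound squeezes $p_n/p_0=e^{-\beta_{\Delta_1}\Delta_n}$ for all $n$ simultaneously. This inequality-plus-limit step is what your sketch lacks. For comparison, note that the paper never re-proves this theorem: it quotes it from Frigerio--Gorini--Verri, and within the paper's own logic it is an immediate corollary of the stronger \cref{theorem:second-order-stability-gibbs} (stability of order three implies stability of order two), whose proof sidesteps the incommensurability issue entirely by coupling to a \emph{three-mode} oscillator with frequencies $\omega_1=E_n-E_0$, $\omega_2=E_m-E_0$, $\omega_3=q\omega_2-p\omega_1$ and deriving a violation of the trace-class property of the partner state whenever $\beta_n\neq\beta_m$.
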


Since Gibbs states are stable of all orders $k \in \mathbb{N}$, stability of order three already characterizes them uniquely \cite{frigerio1986zeroth}. In the next section, we demonstrate that placing the system in thermal contact with another system is sufficient to fix the temperature and enforce the Gibbs form, without invoking the stability of order three.

\section{Stability of Order Two Characterizes Gibbs States}\label{sec:mainresult}

This section contains the central result of our work: a stationary quantum state that remains stable under weak coupling to any auxiliary system in equilibrium must be of Gibbs form. In fact, as can be seen in the proof, it already suffices to require this stability only for auxiliary systems that are \emph{harmonic oscillators}.

\begin{theorem}\label{theorem:second-order-stability-gibbs}
    Let $H:\mathcal{D}(H)\to\mathcal{H}$, with $\mathcal{D}(H)\subseteq\mathcal{H}$, be a Hamiltonian satisfying \cref{Hamiltonianassumption} and $\rho\in\mathcal{S}(\mathcal{H})$ stable of order two with respect to~$H$. Then, there exists an inverse temperature $\beta \in (0, \infty]$ such that
    \begin{equation}
        \rho
        = \rho(\beta)
        = \frac{e^{-\beta H}}{\Tr[e^{-\beta H}]} \, .
    \end{equation}
    Here, taking $\beta = \infty$ (zero temperature) gives the special case where $\rho$ is proportional to the ground-state space projector.
\end{theorem}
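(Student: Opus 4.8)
The plan is to reduce the statement to a functional equation for the populations of $\rho$ and then to pin down its single free parameter, the temperature, using harmonic-oscillator environments. First I would record what is already available. By \cref{theorem:first-seecond-order-stability-function}, second-order stability gives $\rho=f(H)$ with $f:\sigma(H)\to[0,1]$ non-increasing and strictly decreasing on $f^{-1}((0,1])$. Writing the eigenvalues as $E_0\le E_1\le\cdots$ and $p_n=f(E_n)$, the Gibbs claim is equivalent to the statement that the surprisal $s_n:=-\ln p_n$ is an affine function of the energy, $s_n=\beta E_n+\ln Z$, on the support of $\rho$. So the whole problem becomes: show that $s_n-s_m$ depends on $(n,m)$ only through the gap $E_n-E_m$, and is in fact proportional to it with one universal slope $\beta$. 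The degenerate case $\beta=\infty$ (support equal to the ground eigenspace) and the positivity $\beta>0$ will be handled at the end using \cref{Hamiltonianassumption} and the strict-decrease clause.

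Second, I would extract a per-gap relation from oscillator environments. Fixing a gap $\Delta=E_n-E_m$ and taking $H'$ to be a single harmonic oscillator of frequency $\omega=\Delta$, \cref{remark:Stability2withHOgivesGibbs} shows the partner $\rho'$ is a Gibbs state of the oscillator, and \cref{Remark:GapInterpretation} (equal gaps yield equal population ratios) turns the resonance $E_n-E_m=\omega$ into $p_n/p_m=p'_{k+1}/p'_k=e^{-\beta_\omega\Delta}$. Hence the ratio depends only on the value of the gap, not on the representing pair, so $\beta(\Delta):=-\ln(p_n/p_m)/\Delta$ is well defined; using $\omega=\Delta/k$ instead gives $\beta(\Delta)=\beta(\Delta/k)$, so that $\beta$ is constant on each commensurability class of gaps.

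Third, and this is the crux, I would show that $\beta(\Delta)$ is independent of $\Delta$, i.e. that a single temperature governs all gaps, including incommensurate ones; this is exactly the step that previously required the zeroth law. Isolated oscillators cannot link incommensurate gaps, so here I would use a genuine bosonic field, i.e. an environment $H'=\sum_j\omega_j a_j^\dagger a_j$ of many oscillators whose frequencies are rich enough to contain a mode resonant with every gap and to generate additive coincidences $\omega_a+\omega_b=\omega_c$. Tracing out all but one mode, \cref{proposition:first-order-stability-product-to-factors1} shows that each single-mode marginal of the partner is again a valid oscillator-partner, hence Gibbs (\cref{remark:Stability2withHOgivesGibbs}), while at the same time $\rho\otimes\rho'$ is a function of the total energy $H+H'$. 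The key point is that the large energy degeneracy of a field (many mode occupations sharing one total energy) makes ``function of total energy with geometric marginals'' very rigid: summing the joint populations along the degenerate level sets forces the mode-marginal temperatures to coincide and propagates one common $\beta$ even across incommensurate frequencies. With a resonant mode for each gap, the per-gap relation of the previous step then yields $p_n/p_m=e^{-\beta(E_n-E_m)}$ with one and the same $\beta$.

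Finally I would assemble the pieces: $s_n-s_m=\beta(E_n-E_m)$ for all $n,m$ in the support gives $p_n\propto e^{-\beta E_n}$, and \cref{Hamiltonianassumption} guarantees $Z=\Tr[e^{-\beta H}]<\infty$ and $\beta>0$; the strict-decrease clause of \cref{theorem:first-seecond-order-stability-function} forces full support unless $\rho$ is the (possibly degenerate) ground-state projector, which is the $\beta=\infty$ case. I expect the hard part to be the rigidity claim of the third paragraph, namely turning the field's energy degeneracies into a proof that the partner must be a single-temperature thermal state, since this is precisely the place where a passivity-type input (monotone populations) has to be upgraded to the full exponential Gibbs form, closing the gap that \cref{theorem:third-order-stability-gibbs} previously bridged with an extra system.
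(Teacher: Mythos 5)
Your first two paragraphs are sound and match the paper's setup: reducing the theorem to the claim that the per-gap inverse temperatures coincide, and using single-mode oscillators (via \cref{Remark:GapInterpretation} and \cref{remark:Stability2withHOgivesGibbs}) to handle commensurable gaps, exactly as in \cref{appendix:mutually-rational}. Your high-level strategy for the incommensurable case---a multimode oscillator environment with additive frequency coincidences---is also the paper's strategy (it uses exactly three modes). But the crux, your third paragraph, is where the proof actually lives, and there you have a hope rather than an argument: the ``rigidity'' claim is precisely the statement to be proven, and the mechanism you sketch does not work as stated. Two concrete problems. First, your appeal to \cref{proposition:first-order-stability-product-to-factors1} to conclude that each single-mode marginal of the field partner $\rho'$ is Gibbs is invalid: that proposition requires the joint state to be a product across the cut in question, and the partner of a multimode field is not known a priori to factorize over the modes---its product-Gibbs structure is a conclusion that must be derived (the paper derives it from population recursions), not an input. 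Second, ``summing the joint populations along the degenerate level sets'' is not the operative mechanism, and the mere existence of some coincidence $\omega_a+\omega_b=\omega_c$ does not suffice. To illustrate: take $\omega_1=E_n-E_0$, $\omega_2=E_m-E_0$ (with $\beta_n>\beta_m$, say) and a third mode $\omega_3=\omega_2-\omega_1$, and write $p_{j,k,\ell}$ for the partner populations in the number basis. The resonances give $p_{j+1,k,\ell}=p_{j,k,\ell}e^{-\beta_n\omega_1}$ and $p_{j,k+1,\ell}=p_{j,k,\ell}e^{-\beta_m\omega_2}$, while the coincidence gives $p_{j+1,k,\ell+1}=p_{j,k+1,\ell}$, hence $p_{0,0,\ell}=p_{0,0,0}e^{\ell(\beta_n\omega_1-\beta_m\omega_2)}$. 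This contradicts normalizability only if $\beta_n\omega_1>\beta_m\omega_2$, i.e.\ only if $\omega_2/\omega_1<\beta_n/\beta_m$ (and one also needs $\omega_3>0$, i.e.\ $\omega_2/\omega_1>1$); if these fail, this coincidence proves nothing. The paper's remark on one- and two-mode environments shows the same phenomenon: insufficiently rich degeneracy structures admit non-Gibbs stable partners, so the coincidences must be engineered, not merely assumed to exist.

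The single missing idea is a rational-approximation tuning of the third frequency against the unknown ratio $\beta_n/\beta_m$. The paper fixes $n,m$ with (w.l.o.g.) $\beta_n\geq\beta_m$, supposes $\beta_n>\beta_m$, and takes $\omega_3=q\omega_2-p\omega_1$ with integers $p,q$ chosen, by density of the rationals, so that $1<q\omega_2/(p\omega_1)<\beta_n/\beta_m$. This simultaneously makes $\omega_3>0$ (so $H'$ satisfies \cref{Hamiltonianassumption}) and makes the degeneracy $p\omega_1+\omega_3=q\omega_2$ link the two gaps ``in the wrong direction'': the recursions then yield $p_{0,0,\ell}=p_{0,0,0}\,s^{\ell}$ with $s=e^{\beta_n p\omega_1-\beta_m q\omega_2}>1$, contradicting the trace-class (summability) property of the partner state, whence $\beta_n=\beta_m$. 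Note also that the source of the contradiction is normalizability of the partner, not any statement about its marginal temperatures. Without this tuning step, your paragraph three is exactly the open gap that \cref{theorem:third-order-stability-gibbs} previously closed with the zeroth law, so the proposal as it stands does not prove the theorem.
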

\begin{proof} We denote by $E_0<E_1<E_2<\ldots$ the distinct eigenvalues of $H$.
    As $\rho \in\mathcal{S}(\mathcal{H})$ is stable of order two with respect to~$H$, by \Cref{theorem:first-seecond-order-stability-function} there is a non-increasing function $f:\sigma(H)\to [0,1]$ such that $\rho = f(H)$.
    Therefore, we can define, for every $n\in\mathbb{N}$,
    \begin{equation}
        \beta_n
        = -\frac{\log(\nicefrac{f(E_n)}{f(E_0)})}{E_n-E_0}\in (0,\infty],
        \label{eq:beta}
    \end{equation}
    such that by definition
    \begin{equation}
        f(E_n)
        = f(E_0)\exp\left(-\beta_n (E_n-E_0)\right)\quad\forall n\in\mathbb{N}_0. 
    \end{equation}
    Thus, it suffices to prove that $\beta_n=\beta_m$ holds for all $n,m\in\mathbb{N}$.

    \vspace{0.2cm}

    So, let us consider two specific $ n,m\in\mathbb{N}$ with $n\neq m$. Then in particular $E_m,E_n > E_0$ and we can assume w.l.o.g. that $\beta_n\geq\beta_m$. We want to show that $\beta_n=\beta_m$.\vspace{0.3cm}

In the definition of stability of order two (\Cref{definition:stabilityTwo}), the auxiliary Hamiltonian $H'$ may be chosen as any self-adjoint operator satisfying \cref{Hamiltonianassumption}. We take $H'$ to be a three-mode harmonic oscillator,
\begin{equation}
    H' \coloneqq A_{\omega_1} + A_{\omega_2} + A_{\omega_3}\, ,
\end{equation}
where, for $i=1,2,3$, $A_{\omega_i}$ denotes the Hamiltonian of a single-mode harmonic oscillator with frequency $\omega_i$. We choose the frequencies
\begin{equation}
    \omega_1 \coloneqq E_n - E_0, \qquad
    \omega_2 \coloneqq E_m - E_0, \qquad
    \omega_3 \coloneqq -p\,(E_n - E_0) + q\,(E_m - E_0),
\end{equation}
where $p, q \in \mathbb{N}$ will be specified later in the proof in such a way that $\omega_3>0$. Let $\{\ket{n_1,n_2,n_3} : n_1,n_2,n_3 \in \mathbb{N}_0\}$ denote the corresponding number basis. 
On this basis, each mode acts as
\begin{equation}
    A_{\omega_i} \ket{n_1,n_2,n_3}
    = \omega_i\left(n_i + \tfrac{1}{2}\right)\ket{n_1,n_2,n_3},
    \qquad i = 1,2,3,
\end{equation}
and the spectrum of $H'$ is the set 
\begin{equation}
    \sigma(H') = \{E'_{i,j,k} \coloneqq \omega_1 (i+1/2) + \omega_2(j+1/2)+\omega_3(k+1/2)\, |\, i,j,k \in \mathbb{N}_0\}.
\end{equation}
As $\rho$ is stable of order two, there exists a state $\rho_\omega$ such that $\rho \otimes \rho_\omega$ is stable of order one with respect to $H\otimes \mathds{1} + \mathds{1}\otimes H'$. We argue that $\rho_\omega$ must then be a tensor product of Gibbs states:
Indeed, with $f,g,h$ as in \cref{theorem:first-seecond-order-stability-function} and \cref{Remark:GapInterpretation}, we have
\begin{equation}
    \rho \otimes \rho_\omega
    = f(H) \otimes g(H')
    = h\bigl(H + A_{\omega_1} + A_{\omega_2} + A_{\omega_3}\bigr)\, .
\end{equation}

Since $E_n - E_0 = \omega_1$, the eigenvalues $E'_{j,k,l}, E'_{j+1,k,l} \in \sigma(H')$ satisfy
\begin{equation}
    E_n + E'_{j,k,\ell} = E_0 + E'_{j+1,k,\ell}\, ,
\end{equation}
and hence
\begin{align}\label{eq:fg}
    &f(E_n)\,
      g\Bigl(\omega_1\!\left(j + \tfrac{1}{2}\right)
            + \omega_2\!\left(k + \tfrac{1}{2}\right)
            + \omega_3\!\left(\ell + \tfrac{1}{2}\right)\Bigr)\notag\\
    &\qquad= f(E_0)\,
      g\Bigl(\omega_1\!\left(j + \tfrac{3}{2}\right)
            + \omega_2\!\left(k + \tfrac{1}{2}\right)
            + \omega_3\!\left(\ell + \tfrac{1}{2}\right)\Bigr)\, .
\end{align}
We now define
\begin{equation}
    p_{j,k,\ell} \coloneqq 
    g\bigl(\omega_1(j+\tfrac{1}{2}) + \omega_2(k+\tfrac{1}{2}) + \omega_3(\ell+\tfrac{1}{2})\bigr)\, .
\end{equation}
From \cref{eq:fg}, for any $j \geq 0$ we obtain
\begin{align}
    p_{j+1,k,\ell} 
    &= p_{j,k,\ell}\,\frac{f(E_n)}{f(E_0)}
     = p_{j,k,\ell}\,e^{-\beta_n(E_n - E_0)} \notag\\
    &= \ldots = p_{0,k,\ell}\,e^{-(j+1)\beta_n(E_n - E_0)}.
\end{align}
Similarly, using that $E_m-E_0=\omega_2$, we can show
\begin{align}
    p_{j,k+1,\ell} 
    &= p_{j,k,\ell}\,\frac{f(E_m)}{f(E_0)}
     = p_{j,k,\ell}\,e^{-\beta_m(E_m - E_0)} \notag\\
    &=\ldots = p_{j,0,\ell}\,e^{-(k+1)\beta_m(E_m - E_0)}.
\end{align}
For our third recursion relation, since $\omega_3 = q\omega_2 - p\omega_1 (>0)$, we get that 
\begin{equation}
    p_{j+p,k,\ell+1} = p_{j,k+q,\ell}\, .
\end{equation}

Using the three recursive relations, we can now rewrite the eigenvalues as a recursion in the third index:
\begin{align}
    p_{j,k,\ell+1}
    &= p_{j+p,k,\ell+1}\,e^{\beta_n p(E_n - E_0)} \nonumber\\
    &= p_{j,k+q,\ell}\,e^{\beta_n p(E_n - E_0)} \nonumber\\
    &= p_{j,k,\ell}\,e^{\beta_n p(E_n - E_0) - \beta_m q(E_m - E_0)}\label{eq:pjkl}.
\end{align}
In particular,
\begin{equation}
    p_{j,k,\ell}
    = p_{j,k,0}\,e^{\ell\bigl(\beta_n p\omega_1 - \beta_m q\omega_2\bigr)}.
\end{equation}
Combining all of the above, we obtain
\begin{equation}
    p_{j,k,\ell}
    = p_{0,0,0}\,
      e^{-\beta_n j\omega_1}\,
      e^{-\beta_m k\omega_2}\,
      e^{-\ell\bigl(-\beta_n p\omega_1 + \beta_m q\omega_2\bigr)}.
\end{equation}
Hence $\rho_\omega$ is the tensor product of two Gibbs states with inverse temperatures
$\beta_n$ and $\beta_m$, and a third Gibbs state with an effective inverse temperature that depends linearly on $\beta_n$ and $\beta_m$.\vspace{0.1cm}

To complete the proof for $\beta_n=\beta_m$, we show that the assumption $\beta_n>\beta_m$ contradicts the trace class property of $\rho_\omega$. To this end, we choose integers $p,q > 0$ such that
    \begin{equation}
        1 < \frac{\omega_2}{\omega_1}\frac{q}{p} < \frac{\beta_n}{\beta_m}.
    \end{equation}
Then
\begin{equation}
     \beta_n p \omega_1 > \beta_m q \omega_2
    \quad\text{and}\quad
    0 < q\omega_2 - p\omega_1.
\end{equation}
Equivalently,
\begin{equation}
     s \coloneqq e^{-\beta_m q\omega_2 + \beta_n p\omega_1} > 1.
\end{equation}
From the recursion in the third index we obtain
\begin{equation}
     p_{0,0,\ell} = p_{0,0,0}\,s^{\ell}, \qquad \ell \ge 0.
\end{equation}
Since $s>1$, the sequence $(p_{0,0,\ell})_{\ell \ge 0}$ is not summable, which contradicts the trace-class (summability) condition for $\rho_\omega$.
Therefore, we conclude that $\beta_n = \beta_m \equiv \beta$ for all $n,m\in \mathbb{N}$, and hence $\rho = f(H) \propto e^{-\beta H}$.
\end{proof}

\begin{remark}[Harmonic oscillators of various frequencies suffice]
    Although \cref{theorem:second-order-stability-gibbs} and the \cref{definition:stabilityOne,definition:stabilityTwo} are formulated  with respect to arbitrary auxiliary systems, as a consequence of the proofs of \Cref{theorem:first-seecond-order-stability-function,theorem:second-order-stability-gibbs} it is indeed sufficient to restrict to auxiliary harmonic oscillators of various frequencies. So effectively, the considered environment is a bosonic quantum field.
\end{remark}

\begin{remark}[One- and two-mode harmonic oscillators do not suffice]
    The proof of \cref{theorem:second-order-stability-gibbs} relies on dynamical stability when the system is in weak contact with any environment system that is described by  three-mode quantum harmonic oscillators. This means that the considered perturbations simultaneously affect three frequencies. One may wonder whether auxiliary single-mode quantum harmonic oscillators already suffice here, as they do for \cite{frigerio1986zeroth}'s proofs of \Cref{theorem:first-seecond-order-stability-function,theorem:third-order-stability-gibbs}. 
    In \Cref{appendix:mutually-rational}, we show that this is sufficient under the additional assumption that all energy gaps of the system Hamiltonian are commensurable.
    However, demanding stability with respect to environments $H'$ containing fewer than three-mode harmonic oscillators does not always force $\rho$ to be the Gibbs state, as we show with a counterexample below. To show that our counterexample is stable, we use the partial converse in \cite[Theorem 1]{frigerio1986zeroth}, which states: If $\rho$ is an exponentially decaying function of $H$, then $\rho$ is stable of order $1$ with respect to~$H$.

Suppose the system in question is three-dimensional and has incommensurable energy gaps, say, $(E_2-E_0)/(E_1-E_0)\notin\mathbb Q$, and we attach to it a single harmonic oscillator $A_\omega$. The spectrum of $H+A_\omega$ is $\{E_j+\omega(k+1/2)~|~j\in\{0,1,2\},k\in\mathbb N_0\}$.
Note that it is impossible to choose $\omega$ to be an integer fraction of both $E_2-E_0$ and $E_1-E_0$.
If it is not an integer fraction of either, the spectrum has no degeneracies, and $\rho\otimes\rho'$ is stable with $\rho'=g(A_\omega)$ as long as $g$ decays exponentially with energy. If it is an integer fraction of one transition, say $E_1-E_0$, there are infinitely many degeneracies of the form $E_1+k\omega=E_0+(k-\ell)\omega$ for some fixed $\ell=(E_1-E_0)/\omega$, but none that involve $E_2$.
If we now take $p'_k\propto\exp(-\beta_1k\omega)$, we find that again $\rho\otimes\rho'$ is an asymptotically exponentially decaying function of energy and thus stable.
The same works when exchanging $1\leftrightarrow2$.
Thus, in fact, for any $\rho=f(H)$ and any $\omega$, we can always find $\rho'$ such that the joint system is stable.

The argument in the previous paragraph can be extended to environments composed of exactly two harmonic modes $A_{\omega_1}+A_{\omega_2}$.
First note that in the two-oscillator setting, we can restrict ourselves to considering incommensurable $\omega_1, \omega_2$, since otherwise we could instead equivalently consider a single harmonic oscillator with a frequency equal to the greatest common measure of $\omega_1,\omega_2$, which reduces to the case considered in the previous paragraph. We will argue that we can always choose
\begin{equation}
    \rho'=g(A_{\omega_1}+A_{\omega_2})
\end{equation}
such that the overall state $\rho\otimes\rho'$ is stable.
   
Let $\vec\Omega=(\omega_1,\omega_2)$. Suppose there exist $\vec x,\vec y\in\mathbb Z^2$ such that $\vec x\cdot\vec\Omega=E_1-E_0$ and $\vec y \cdot \vec \Omega=E_2-E_0$.
    Note that $\vec x$ and $\vec y$ are linearly independent since the gaps are incommensurable by assumption.
    Next, we pick the smallest set of vectors $\mathcal A\subset\mathbb N^2$
    such that the following set 
    \begin{equation}
        \mathcal B = \{\vec a+z_1 \vec x+ z_2 \vec y\,|\,z_1,z_2\in\mathbb Z,\vec a\in\mathcal A\}
    \end{equation}
    contains $\mathbb N^2$ as a proper subset, $\mathbb N^2\subset\mathcal B$.
    Due to the linear independence of $\vec x,\vec y$, we can thus write any $\vec v\in\mathbb N^2$ uniquely as $\vec v=\vec a+z_1\vec x+z_2\vec y$.
    Let us now specify function values $g(\vec a\cdot\vec\Omega)$ for all $\vec a\in\mathcal A$. The choice is almost arbitrary, but we take them as some exponentially decaying function of their corresponding energy.
    Stability then forces us to set all the other populations according to
    \begin{equation}
        g((\vec a+z_1\vec x+z_2\vec y)\cdot\vec\Omega)=g(\vec a\cdot\vec\Omega)(p_1/p_0)^{z_1}(p_2/p_0)^{z_2}.
    \end{equation}
    As a last step, we can normalize $g$ to unit trace.
    Then, by construction, $\rho\otimes\rho'$ is stable since it is an asymptotically exponentially decaying function of energy.

\end{remark}

\section*{Acknowledgments} 
VB acknowledges support from the European Research Council (ERC, Grant Agreement No.~818761), VILLUM FONDEN via the QMATH Centre of Excellence (Grant No.~10059), and the Novo Nordisk Foundation (Grant No.~NNF20OC0059939, “Quantum for Life”). MCC was partially supported by a DAAD PRIME fellowship. AK acknowledges support by the Deutsche Forschungsgemeinschaft (DFG, German Research Foundation) under Germany's Excellence Strategy –  EXC-2111 – 390814868. DM acknowledges financial support by the Novo Nordisk Foundation under grant numbers NNF22OC0071934 and NNF20OC0059939. MMW acknowledges support by the Deutsche Forschungsgemeinschaft (DFG, German Research Foundation) under Germany's Excellence Strategy –  EXC-2111 – 390814868 and via the TRR 352 – Project-ID 470903074. \vspace{0.1cm}

VB and MCC thank Jonathan Oppenheim, and VB and AK thank Andreas Winter for helpful discussions. VB also thanks Álvaro Alhambra, Daniel Burgarth, Toby Cubitt, and Cambyse Rouzé.

\newpage
\setcounter{secnumdepth}{0}
\defbibheading{head}{\section{References}}
\sloppy
\printbibliography[heading=head]

\newpage
\appendix
\setcounter{secnumdepth}{2}

\section{The Case of Commensurable Energy Gaps}\label{appendix:mutually-rational}

In this Appendix, we show that if all energy gaps of $H$ are commensurable, then dynamical stability under weak contact with single-mode quantum harmonic oscillators already implies the Gibbs form. Although one could in this case follow the same line of reasoning as in the proof of \cref{theorem:second-order-stability-gibbs}, we instead present a different, elegant proof, relying on the following lemma, which we obtain by extending the argument of \cite[Section 3]{frigerio1986zeroth}.

\begin{lemma}[{\cite{thirring1983course}, \cite[Section 3]{frigerio1986zeroth}}]\label{lemma:defining-phi}
    Let $H:\mathcal{D}(H)\to\mathcal{H}$, with $\mathcal{D}(H)\subseteq\mathcal{H}$, be a Hamiltonian satisfying \cref{Hamiltonianassumption}, and let $\rho\in\mathcal{S}(\mathcal{H})$ be stable of order two with respect to~$H$.
    Let $\mathcal{H}'$ be another Hilbert space, and let $H':\mathcal{D}(H')\to\mathcal{H}'$, with $\mathcal{D}(H')\subseteq\mathcal{H}'$, be a Hamiltonian also satisfying \cref{Hamiltonianassumption}. Suppose $\rho'\in\mathcal{S}(\mathcal{H}')$ is such that the product state $\rho\otimes\rho'$ is stable of order one with respect to the total Hamiltonian~$H\otimes\mathds{1}_{\mathcal{H}'} + \mathds{1}_{\mathcal{H}}\otimes H'$.
    Assume further that $\rho = f(H)$ and $\rho'=g(H')$, where the functions $f$ and $g$ are given according to \Cref{Remark:GapInterpretation}.
    Then, for any fixed $r\in\mathbb{N}_0$ such that $g(E'_r)>0$, we can define a function $\Phi:(\sigma(H) - E_0)+(\sigma(H') - E'_r)\to [0,\infty)$ that relates the populations of different energy levels via
    \begin{equation}
        \Phi(E_m - E_0+ E'_n - E'_r)
        = \frac{f(E_m) g(E'_n)}{f(E_0) g(E'_r)}\, .
    \end{equation}
    This function $\Phi$ satisfies
    \begin{equation}\label{eq:functional-equation}
        \Phi(E - E_0 + E' - E'_r)
        = \Phi(E - E_0)  \Phi(E' - E'_r)\quad\forall E\in\sigma(H), \forall E'\in\sigma(H') \,
    \end{equation}
    and moreover $\Phi\rvert_{(\sigma(H) - E_0)}$ is independent of $H'$.
\end{lemma}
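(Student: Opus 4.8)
The plan is to construct $\Phi$ explicitly from the function representing the product state as a function of the total Hamiltonian, and then to read off both the functional equation and the independence claim directly from the defining formula. Everything hinges on a single structural input, which I would record first. Since $\rho\otimes\rho'$ is stable of order one with respect to the total Hamiltonian $H_{\mathrm{tot}} := H\otimes\mathds{1}_{\mathcal{H}'} + \mathds{1}_{\mathcal{H}}\otimes H'$ (which itself satisfies \cref{Hamiltonianassumption}, as $e^{-\beta H_{\mathrm{tot}}}=e^{-\beta H}\otimes e^{-\beta H'}$ is trace class), \cref{theorem:first-seecond-order-stability-function}(1) provides a function $h$ with $\rho\otimes\rho' = h(H_{\mathrm{tot}})$. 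Comparing the eigenvalues of $h(H_{\mathrm{tot}})$ and of $f(H)\otimes g(H')$ on the product eigenbasis yields the crucial identity
\[
    f(E_m)\,g(E'_n) = h(E_m + E'_n)\qquad\text{for all } m,n,
\]
i.e. the product $f(E_m)g(E'_n)$ depends on the pair of indices only through the total energy $E_m+E'_n$ (this is exactly the content of \cref{Remark:GapInterpretation}). I would also note that $f(E_0)>0$, since $f$ is non-increasing and $\rho$ has unit trace, and $g(E'_r)>0$ by hypothesis, so $h(E_0+E'_r)=f(E_0)g(E'_r)>0$.

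With this in hand I would \emph{define}, for $x\in(\sigma(H)-E_0)+(\sigma(H')-E'_r)$,
\[
    \Phi(x) := \frac{h(x + E_0 + E'_r)}{h(E_0 + E'_r)}.
\]
The hard part—the only place where something genuinely must be checked—is well-definedness: a single element $x$ of the Minkowski sum may admit several decompositions $x=(E_m-E_0)+(E'_n-E'_r)$, and I must verify that the asserted value $f(E_m)g(E'_n)/(f(E_0)g(E'_r))$ is the same for each. But every such decomposition satisfies $x+E_0+E'_r=E_m+E'_n$, so all associated total energies coincide; applying the identity $f(E_m)g(E'_n)=h(E_m+E'_n)$ from the first step shows the value equals $h(x+E_0+E'_r)/h(E_0+E'_r)$ regardless of the decomposition. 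This simultaneously establishes the defining relation $\Phi(E_m-E_0+E'_n-E'_r)=f(E_m)g(E'_n)/(f(E_0)g(E'_r))$ and, since $f,g$ take values in $[0,1]$ with positive denominator, that $\Phi$ maps into $[0,\infty)$.

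The functional equation \cref{eq:functional-equation} and the independence statement then follow by evaluating $\Phi$ on ``one-sided'' points. Taking the trivial second index $n=r$ gives $\Phi(E-E_0)=f(E)/f(E_0)$ for every $E\in\sigma(H)$, and taking $m=0$ gives $\Phi(E'-E'_r)=g(E')/g(E'_r)$ for every $E'\in\sigma(H')$; multiplying these reproduces precisely $f(E)g(E')/(f(E_0)g(E'_r))=\Phi(E-E_0+E'-E'_r)$, which is the claimed multiplicativity. Finally, the one-sided formula $\Phi(E-E_0)=f(E)/f(E_0)$ involves only $f$, and $f$ is fixed by $\rho=f(H)$ through \cref{theorem:first-seecond-order-stability-function}(1) alone; hence $\Phi\rvert_{(\sigma(H)-E_0)}$ is independent of the auxiliary data $\mathcal{H}',H',\rho'$, proving the last assertion. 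In short, the entire lemma reduces to the representability of $\rho\otimes\rho'$ as a function of $H_{\mathrm{tot}}$, with well-definedness being the sole substantive step and everything afterwards amounting to bookkeeping.
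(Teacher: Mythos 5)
Your proof is correct and follows essentially the same route as the paper's: both reduce well-definedness of $\Phi$ to the fact that $\rho\otimes\rho'$ is a function $h$ of the total Hamiltonian, so that $f(E_m)g(E'_n)$ depends only on the total energy $E_m+E'_n$ --- the paper invokes this via the gap-ratio relation of \cref{Remark:GapInterpretation}, while you use the identity $f(E_m)g(E'_n)=h(E_m+E'_n)$ directly. If anything, your write-up is slightly more careful, since working with $h$ instead of population ratios avoids division-by-zero concerns for vanishing populations, and it makes explicit the independence of $\Phi\rvert_{(\sigma(H)-E_0)}$ from $H'$, which the paper treats as immediate from the definition.
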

\begin{proof}
    Since $\rho'$ is a quantum state, there exists $r\in\mathbb{N}_0$ with $g(E'_r)>0$.
    Now recall from \Cref{Remark:GapInterpretation} that
    \begin{equation}
        \frac{p_n}{p_m} = \frac{p'_s}{p'_r} \quad \text{whenever} \quad
        E_n - E_m = E'_s - E'_r \, .
    \end{equation}
    Thus, $\Phi$ is well-defined, and the Cauchy functional equation \cref{eq:functional-equation} is immediate from the definition of $\Phi$.
\end{proof}

We are now ready to give a simple proof of \Cref{theorem:second-order-stability-gibbs} restricted to Hamiltonians with commensurable energy gaps. Although this proof does not establish the full result, it is conceptually different from the full proof of \Cref{theorem:second-order-stability-gibbs}; in particular, it uses only auxiliary single-mode quantum harmonic oscillators. The argument is as follows.

First, assume without loss of generality that $E_0=0$. Indeed, we can always shift the Hamiltonian by a constant and consider $H - E_0 \mathds{1}$ instead of $H$. Such a shift only affects the unitary evolution $e^{-itH}$ by a global phase, leaving the dynamics---and hence the definition of stability---unchanged.

Assume now that $\rho$ is stable of order two with respect to~$H$. 
Define $\beta_n$ for $n\in\mathbb{N}$ as in the proof of \Cref{theorem:second-order-stability-gibbs} to enforce
\begin{equation}
    f(E_n)
    = f(E_0)\exp\left(-\beta_n E_n\right)\quad\forall n\in\mathbb{N}_0\, ,
\end{equation}
independently of which value we choose for $\beta_0$.
Thus, it suffices to prove that $\beta_m=\beta_n$ holds for all $m,n\in\mathbb{N}$.

Let $m,n\in\mathbb{N}$ with $m<n$. In particular $E_m,E_n > 0$. 
Now assume that $E_m(=E_m-E_0)$ and $E_n(=E_n-E_0)$ are commensurable, that is, $\tfrac{E_m}{E_n}\in\mathbb{Q}$.

Take coprime $k,\ell \in \mathbb{N}$ such that $\tfrac{E_m}{E_n} = \tfrac{k}{\ell}$ and choose $H'$ to be the Hamiltonian of a harmonic oscillator with frequency $\omega := \frac{E_n}{\ell} = \frac{E_m}{k}\,$. Then, 
\begin{equation}
    \Phi(k E_n)
    = \Phi(\underbrace{E_n}_{\in\sigma(H)} + \underbrace{(k-1)E_n}_{=\frac{(k-1)\ell E_n}{\ell} \in\sigma(H')-E'_r})
    \overset{\Cref{eq:functional-equation}}{=} \Phi(E_n)\Phi((k-1)E_n)\, .
\end{equation}
By iterating we obtain $\Phi(k E_n) = \Phi(E_n)^k$. Similarly $\Phi(\ell E_m)=\Phi(E_m)^\ell$. As $k E_n = \ell E_m$, it follows that
\begin{equation}
    \Phi(E_m)
    = \Phi(E_n)^{\nicefrac{k}{\ell}}\, .
\end{equation}

Using $\Phi(E_j) = \exp(-\beta_j E_j)$, we get
\begin{equation}
     \exp(-\beta_m E_m)
    = \left(\exp(-\beta_n E_n)\right)^{k/\ell}
    = \exp(-\beta_n E_m)\, ,
\end{equation}
and hence $\beta_m = \beta_n$ because $E_m > 0$.

\section{Stability of Order Two Implies Stability of Order One for the Marginals}

The observation made in this part of the Appendix is already noted in \cite{frigerio1986zeroth}; we nevertheless include a simple proof to keep the presentation self-contained and coherent.

\begin{proposition}[Stability of order two implies stability of order one for the marginals]\label{proposition:first-order-stability-product-to-factors1}
    Let $H:\mathcal{D}(H)\to\mathcal{H}$, with $\mathcal{D}(H)\subseteq\mathcal{H}$, and $H':\mathcal{D}(H')\to\mathcal{H}'$, with $\mathcal{D}(H')\subseteq\mathcal{H}'$, be Hamiltonians satisfying \cref{Hamiltonianassumption}. Assume that $\rho\in\mathcal{S}(\mathcal{H})$ and $\rho'\in\mathcal{S}(\mathcal{H}')$ are such that $\rho\otimes\rho'$ is stable of order one with respect to $H\otimes\mathds{1}_{\mathcal{H}'} + \mathds{1}_{\mathcal{H}}\otimes H'$.
    Then $\rho$ is stable of order one with respect to $H$ and $\rho'$ is stable of order one with respect to $H'$.
\end{proposition}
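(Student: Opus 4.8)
The plan is to establish stability of $\rho$ with respect to $H$ by lifting an arbitrary admissible test pair $(V,O)$ on $\mathcal{H}$ to the joint system and invoking the assumed first-order stability of $\rho\otimes\rho'$; the claim for $\rho'$ then follows by interchanging the two tensor factors. Concretely, given an $H$-bounded self-adjoint $V$ and a bounded self-adjoint $O\in\mathcal{B}(\mathcal{H})$, I would feed the joint stability hypothesis the lifted perturbation $\tilde V\coloneqq V\otimes\mathds{1}_{\mathcal{H}'}$ and the lifted observable $\tilde O\coloneqq O\otimes\mathds{1}_{\mathcal{H}'}$. The reason for lifting the perturbation to the first factor \emph{only} is that $(H+\lambda V)\otimes\mathds{1}_{\mathcal{H}'}$ commutes strongly with $\mathds{1}_{\mathcal{H}}\otimes H'$, so the perturbed joint propagator factorizes,
\[
    e^{-it(\tilde H+\lambda\tilde V)}=e^{-it(H+\lambda V)}\otimes e^{-itH'},\qquad \tilde H\coloneqq H\otimes\mathds{1}_{\mathcal{H}'}+\mathds{1}_{\mathcal{H}}\otimes H'.
\]

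Two preliminary facts are needed before this can be exploited. First, the marginals must be stationary for the marginal stability statement to even be well-posed: since $\rho\otimes\rho'$ is stationary under $\tilde H$ and $e^{-it\tilde H}=e^{-itH}\otimes e^{-itH'}$, taking the partial trace over $\mathcal{H}'$ (respectively $\mathcal{H}$) of the stationarity identity and using $\Tr[\rho']=\Tr[\rho]=1$ yields $e^{-itH}\rho e^{itH}=\rho$ and $e^{-itH'}\rho'e^{itH'}=\rho'$. Second, $\tilde V$ must be a legitimate perturbation, i.e.\ $\tilde H$-bounded. This is the one step that genuinely uses \cref{Hamiltonianassumption}: since both $H$ and $H'$ satisfy it, they are bounded below, say by $E_0$ and $E'_0$, and on joint eigenvectors one has the elementary estimate $|E_i|\le |E_i+E'_j|+|E_0|+|E'_0|$ (splitting into the cases $E_i\ge 0$ and $E_i<0$). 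Via the joint spectral calculus of the commuting operators $H\otimes\mathds{1}$ and $\mathds{1}\otimes H'$ this promotes to the operator bound $\|(H\otimes\mathds{1})\Psi\|\le\|\tilde H\Psi\|+(|E_0|+|E'_0|)\|\Psi\|$, which combined with the $H$-bound on $V$ shows that $\tilde V=V\otimes\mathds{1}_{\mathcal{H}'}$ is $\tilde H$-bounded.

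With these in place the computation is bookkeeping. Using the factorization together with the stationarity $e^{-itH'}\rho'e^{itH'}=\rho'$, the perturbed joint evolution of $\rho\otimes\rho'$ equals $\bigl(e^{-it(H+\lambda V)}\rho\,e^{it(H+\lambda V)}\bigr)\otimes\rho'$; since $\tilde O=O\otimes\mathds{1}_{\mathcal{H}'}$ and $\Tr[\rho']=1$, the joint expectation value collapses to $\Tr[O\,e^{-it(H+\lambda V)}\rho\,e^{it(H+\lambda V)}]$, while $\Tr[\tilde O(\rho\otimes\rho')]=\Tr[O\rho]$. Hence the $\lim_{\lambda\to0}\sup_{t\ge0}$ condition guaranteed by the first-order stability of $\rho\otimes\rho'$ applied to $(\tilde V,\tilde O)$ is literally the first-order stability condition for $\rho$ applied to $(V,O)$. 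As $V$ and $O$ were arbitrary, $\rho$ is stable of order one; exchanging the roles of the two factors (lifting perturbations and observables as $\mathds{1}_{\mathcal{H}}\otimes V'$ and $\mathds{1}_{\mathcal{H}}\otimes O'$) gives the same conclusion for $\rho'$.

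I expect the only real obstacle to be the relative-boundedness verification of $\tilde V$: one must treat the domain of the tensor-sum Hamiltonian carefully and ensure the eigenvalue estimate genuinely lifts to an operator inequality via the joint spectral measure. Everything else—the factorization of the propagator and the collapse of the traces—is immediate once the test perturbation is chosen to act on a single factor, which is exactly the feature that makes the product construction transparent.
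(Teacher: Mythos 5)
Your proposal is correct and follows essentially the same route as the paper's proof: lift the test pair to $(V\otimes\mathds{1}_{\mathcal{H}'},\,O\otimes\mathds{1}_{\mathcal{H}'})$, use the factorization of the perturbed propagator and stationarity of $\rho'$, and observe that the joint stability condition collapses to the marginal one. You additionally verify two points the paper leaves implicit---that the marginals are stationary and that $V\otimes\mathds{1}_{\mathcal{H}'}$ is relatively bounded with respect to the sum Hamiltonian---which is careful and correct, but not a different argument.
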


\begin{proof}
We show the claim for $(\rho,H)$; the case $(\rho',H')$ is analogous.\vspace{0.1cm}

By assumption, $\rho \otimes \rho'$ is stable of order one with respect to 
$H\otimes\mathds{1}_{\mathcal{H}'} + \mathds{1}_{\mathcal{H}}\otimes H'$, that is, for any 
$(H\otimes\mathds{1}_{\mathcal{H}'} + \mathds{1}_{\mathcal{H}}\otimes H')$-bounded self-adjoint operator $V$ and any self-adjoint operator
$O \in \mathcal{B}(\mathcal{H} \otimes \mathcal{H}')$,
\begin{equation}\label{limitorder21}
        \lim_{\lambda\to 0}\sup_{t\geq 0} \Tr\!\left[O \Bigl( e^{-i t (H\otimes\mathds{1}_{\mathcal{H}'} + \mathds{1}_{\mathcal{H}}\otimes H' + \lambda V)} \,\rho \otimes \rho' \,e^{i t (H\otimes\mathds{1}_{\mathcal{H}'} + \mathds{1}_{\mathcal{H}}\otimes H'+\lambda V)} - \rho \otimes \rho' \Bigr)\right] 
        = 0\, .
\end{equation}    

In particular, we may choose $O=\tilde O\otimes\mathds{1}$ and $V=\tilde V\otimes\mathds{1}$, where $\tilde O\in\mathcal{B}(\mathcal{H})$ is any self-adjoint operator and $\tilde V$ is any self-adjoint $H$-bounded operator on $\mathcal{H}$. Under these choies \eqref{limitorder21} reduces to
\begin{equation}
    \lim_{\lambda\to0}\sup_{t\ge0}
\Tr\!\left[\tilde O\,\left(e^{-it(H+\lambda\tilde V)}\rho\,e^{it(H+\lambda\tilde V)}
-\rho \right) \right]
=0.
\end{equation}

Since $\tilde O$ and $\tilde V$ were arbitrary, this is exactly stability of order one of $\rho$ with respect to $H$.
\end{proof}

\end{document}